\newtheorem{theorem}{Theorem}
\newtheorem{lemma}{Lemma}
\newtheorem{example}{Example}
\newtheorem{remark}{Remark}
\newenvironment{proof}{\noindent {\bf Proof.}}{\hfill$\Box$}
\begin{document}

\title{Nonlinear first order PDEs reducible to autonomous form polynomially homogeneous in the derivatives}

\author{M.~Gorgone and F.~Oliveri\\
\ \\
{\footnotesize MIFT Department, University of Messina}\\
{\footnotesize Viale F. Stagno d'Alcontres 31, 98166 Messina, Italy}\\
{\footnotesize mgorgone@unime.it; foliveri@unime.it}
}

\date{Published in \textit{J. Geom. Phys.} \textbf{113}, 53--64 (2017).}

\maketitle

\begin{abstract}
It is proved a theorem providing necessary and sufficient conditions enabling one to map a nonlinear system 
of first order partial differential equations, polynomial in the derivatives, to an equivalent autonomous first 
order system polynomially homogeneous in the derivatives. The result is intimately related to the symmetry 
properties of the source system, and the proof, involving the use of the canonical variables associated to the 
admitted Lie point symmetries,  is constructive. First order Monge-Amp\`ere systems, either with constant coefficients or with coefficients depending on the field variables, where the theorem can be successfully applied, are considered.
\end{abstract}

\noindent
\textbf{Keywords.} Lie symmetries; First order Monge--Amp\`ere systems; Transformation to quasilinear form.

\section{Introduction}
\label{sec:introduction}
Lie group analysis 
\cite{Ovsiannikov,Ibragimov, Olver, Olver1995,Baumann, BlumanAnco,Meleshko2005, BlumanCheviakovAnco2009}
provides a unified and elegant algorithmic framework to a deep understanding and fruitful handling 
of differential equations. It is known that Lie point symmetries admitted by ordinary differential equations
allow for their order lowering and possibly reducing them to quadrature, whereas in the case of partial differential equations they can be used for the determination of special 
(invariant) solutions of initial and boundary value problems. Also, the Lie symmetries are important ingredients
in the derivation of conserved quantities, or in the construction of relations between different differential 
equations that turn out to be equivalent  \cite{BlumanCheviakovAnco2009,Bluman:article,BlumanKumei1989,
DonatoOliveri1994, DonatoOliveri1995, DonatoOliveri1996, CurroOliveri2008, 
Oliveri2010,Oliveri2012}.
Lie point symmetries of differential equations, in fact,  can be used to construct a mapping from a given
(source) system of differential equations to another (target) suitable system; if we
consider one-to-one (invertible) point mappings, then a one-to-one
correspondence between Lie point symmetries admitted by the source and target system of differential
equations arises. In other words, the Lie algebra of infinitesimal operators of the target system of differential
equations has to be isomorphic to the Lie algebra of infinitesimal operators of the source system of
differential equations. This property has been used to give necessary and sufficient conditions for reducing a system of partial differential equations to autonomous form \cite{DonatoOliveri1995,DonatoOliveri1996}, a system of first
order nonlinear partial differential equations to linear form \cite{Bluman:article,BlumanKumei1989,DonatoOliveri1994,DonatoOliveri1996}, nonautonomous and/or nonhomogeneous quasilinear  systems of partial differential equations to autonomous and homogeneous form \cite{CurroOliveri2008, Oliveri2012}. In particular, in \cite{Oliveri2012}, 
it has been proved a theorem providing the necessary and sufficient conditions in order to map a general first order quasilinear system of partial differential equations, say
\begin{equation}
\label{quasilinear}
\sum_{i=1}^n A^i(\mathbf{x},\mathbf{u})\frac{\partial\mathbf{u}}{\partial x_i}=\mathbf{B}(\mathbf{x},\mathbf{u}),
\end{equation}
where $\mathbf{x}\in\mathbb{R}^n$, $\mathbf{u}\in\mathbb{R}^m$, $A^i$ are $m\times m$ matrices 
with entries depending at most on $\mathbf{x}$ and $\mathbf{u}$, and the source term 
$\mathbf{B}\in \mathbb{R}^m$ depends at most on $\mathbf{x}$ and $\mathbf{u}$ too,
into a first order quasilinear homogeneous and autonomous system. This reduction, when it is possible, 
is performed by an invertible point transformation like
\begin{equation}
\mathbf{z}=\mathbf{Z}(\mathbf{x}), \qquad \mathbf{w}=\mathbf{W}(\mathbf{x},\mathbf{u}),
\end{equation}
which preserves the quasilinear structure of the system, and whose construction is algorithmically suggested 
by the Lie symmetries admitted by (\ref{quasilinear}).

In this paper, we consider a general nonlinear system of first order partial differential equations involving the derivatives of the unknown variables in polynomial (of degree greater than 1) form, and establish 
a theorem giving necessary and sufficient conditions in order to map it to an autonomous system which is polynomially homogeneous in the derivatives. 

In some relevant situations, \emph{e.g.}, Monge--Amp\`ere systems, 
the target system results to be quasilinear, but there are cases where the system we obtain is polynomially homogeneous in the derivatives but not quasilinear. This means that the conditions of the theorem are only necessary for the reduction of a \emph{nonlinear} first order system to autonomous and homogeneous \emph{quasilinear} form \cite{GorgoneOliveri2016}. 

The main difference of the theorem presented in this paper with the similar one proved in \cite{Oliveri2012} (concerned with the transformation of a general first order quasilinear system of partial differential equations into a first order quasilinear homogeneous and autonomous
system) consists in the possibility of admitting now an
 invertible point transformation like
\begin{equation}
\mathbf{z}=\mathbf{Z}(\mathbf{x},\mathbf{u}), \qquad \mathbf{w}=\mathbf{W}(\mathbf{x},\mathbf{u}),
\end{equation}
\emph{i.e.}, a mapping where the new independent variables $\mathbf{z}$ are allowed to depend also on the old dependent ones.

The plan of the paper is the following. In Section~\ref{main}, 
the theorem giving necessary and sufficient conditions for the existence of an
invertible mapping linking a nonlinear system of first order partial differential equations which is polynomial in the derivatives to an autonomous system  polynomially homogeneous in the derivatives is proved.
In Section~\ref{applications}, the theorem is applied to various general first order Monge--Amp\`ere systems. Finally,  Section~\ref{conclusions}  contains some concluding remarks.

\section{Main result}
\label{main}
Let us consider a general system of first order partial differential equations
\begin{equation}
\label{generalsystem}
\Delta\left(\mathbf{x},\mathbf{u},\mathbf{u}^{(1)}\right)=0,
\end{equation}
where $\mathbf{x}\in\mathbb{R}^n$, $\mathbf{u}\in\mathbb{R}^m$ and $\mathbf{u}^{(1)}\in\mathbb{R}^{mn}$
are the independent variables, the dependent variables, and the first order partial derivatives, respectively.
In particular, in the following we consider systems (\ref{generalsystem}) composed by equations which are polynomial in the derivatives, with coefficients depending at most on $\mathbf{x}$ and $\mathbf{u}$, \emph{i.e.} systems made by equations of the form
\begin{equation}
\label{polynomial}
\sum_{|\boldsymbol\alpha|,|\mathbf{j}|=1}^{N_s} A^s_{\boldsymbol\alpha \mathbf{j}}(\mathbf{x},\mathbf{u})
\prod_{k=1}^{|\boldsymbol\alpha|}\frac{\partial u_{\alpha_k}}{\partial x_{j_k}}+B^s(\mathbf{x},\mathbf{u})=0,\qquad s=1,\ldots,m,
\end{equation}
where $\boldsymbol\alpha$ is the multi-index $(\alpha_1,\ldots,\alpha_r)$, $\mathbf{j}$ the multi-index
$(j_1,\ldots,j_r)$,  $\alpha_k=1,\ldots,m$, $j_k=1,\ldots,n$, $N_s$ are integers,
and $A^s_{\boldsymbol\alpha \mathbf{j}}(\mathbf{x},\mathbf{u})$, $B^s(\mathbf{x},\mathbf{u})$ smooth functions
of their arguments.
 
The aim is to determine necessary and sufficient conditions for the construction of an invertible point transformation
\begin{equation}
\label{map}
\mathbf{z}=\mathbf{Z}\left(\mathbf{x},\mathbf{u}\right), \qquad
\mathbf{w}=\mathbf{W}\left(\mathbf{x},\mathbf{u}\right), 
\end{equation}
mapping (\ref{polynomial}) into an equivalent autonomous system 
which is  homogeneous polynomial in the derivatives $\mathbf{w}^{(1)}$, \emph{i.e.}, made by equations of the form
\begin{equation}
\label{homopolynomial}
\sum_{|\boldsymbol\alpha|,|\mathbf{j}|=\overline{N}_s} \widetilde{A}^s_{\boldsymbol\alpha\mathbf{j}}(\mathbf{w})
\prod_{k=1}^{\overline{N}_s}\frac{\partial w_{\alpha_k}}{\partial z_{j_k}}=0, \qquad s=1,\ldots,m,
\end{equation}
for some integers $\overline{N}_s$; of course, it may occur that the target system turns out to be linear in the derivatives, \emph{i.e.},  $\overline{N}_s=1$ $(s=1,\ldots,m)$, whereupon we
have an autonomous and homogeneous quasilinear system.

The following lemma, guarantees  that an invertible point transformation like (\ref{map}) preserves the polynomial
structure in the derivatives. 

\begin{lemma}
Given a first order system of partial differential equations like (\ref{polynomial}) which is polynomial in the derivatives, then an invertible point transformation like (\ref{map})
produces a first order system which is still polynomial in the derivatives.
\end{lemma}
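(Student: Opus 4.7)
The plan is to use the invertibility of the mapping~(\ref{map}) together with the chain rule to re-express each old derivative $\partial u_\alpha/\partial x_j$ as a rational function of the new derivatives $\partial w_s/\partial z_i$ sharing a single polynomial denominator, and then to clear denominators in each equation of~(\ref{polynomial}).

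First, I would invert~(\ref{map}) locally as $\mathbf{x}=\mathbf{X}(\mathbf{z},\mathbf{w})$, $\mathbf{u}=\mathbf{U}(\mathbf{z},\mathbf{w})$. Treating $\mathbf{w}$ as a function of $\mathbf{z}$, so that the identities $\mathbf{x}=\mathbf{X}(\mathbf{z},\mathbf{w}(\mathbf{z}))$ and $\mathbf{u}=\mathbf{U}(\mathbf{z},\mathbf{w}(\mathbf{z}))$ make $\mathbf{x}$ and $\mathbf{u}$ depend on $\mathbf{z}$, the chain rule produces
\[
J_{ji}=\frac{\partial X_j}{\partial z_i}+\sum_{s=1}^{m}\frac{\partial X_j}{\partial w_s}\frac{\partial w_s}{\partial z_i}, \qquad K_{\alpha i}=\frac{\partial U_\alpha}{\partial z_i}+\sum_{s=1}^{m}\frac{\partial U_\alpha}{\partial w_s}\frac{\partial w_s}{\partial z_i},
\]
both affine in the new derivatives with smooth coefficients depending only on $(\mathbf{z},\mathbf{w})$.

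Second, I would observe that the matrix $J=(J_{ji})_{j,i}$ coincides with the Jacobian $\partial\mathbf{x}/\partial\mathbf{z}$ along a solution, and so its determinant $\mathcal{D}$ -- a polynomial of degree $n$ in the new derivatives -- is nonzero by the non-degeneracy of~(\ref{map}). Inverting $J$ via Cramer's rule and combining with the identity $\partial u_\alpha/\partial x_j=\sum_{i=1}^{n} K_{\alpha i}(J^{-1})_{ij}$, each $\partial u_\alpha/\partial x_j$ takes the form of a polynomial of degree $n$ in $\partial w/\partial z$ divided by $\mathcal{D}$.

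Finally, I would substitute these rational expressions into the $s$-th equation of~(\ref{polynomial}), which is polynomial of degree $N_s$ in $\partial u/\partial x$; this produces a rational identity with common denominator $\mathcal{D}^{N_s}$. Clearing this nonvanishing factor yields an equation that is polynomial (of degree at most $nN_s$) in the new derivatives, with coefficients that are smooth functions of $(\mathbf{z},\mathbf{w})$ obtained from $\mathbf{X}$, $\mathbf{U}$ and the original coefficients $A^s_{\boldsymbol\alpha\mathbf{j}},B^s$ composed with the inverse map. No step is genuinely deep; the only care needed is in tracking the Cramer bookkeeping, and the only non-formal input is the non-vanishing of $\mathcal{D}$, which is guaranteed by the invertibility of~(\ref{map}).
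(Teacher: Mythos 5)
Your argument is correct and follows exactly the route the paper has in mind: the paper's own proof is just the one-line remark ``straightforward, by using the chain rule,'' and your Cramer's-rule bookkeeping (expressing each $\partial u_\alpha/\partial x_j$ as a degree-$n$ polynomial in the new derivatives over the common denominator $\mathcal{D}=\det J$, then clearing $\mathcal{D}^{N_s}$) is simply that remark carried out in full detail. Nothing further is needed.
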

\begin{proof}
Straightforward, by using the chain rule.
\end{proof}

\begin{theorem}\label{th_1}
The nonlinear first order  system of partial differential equations polynomial in the derivatives
\begin{equation}
\label{source}
\sum_{|\boldsymbol\alpha|,|\mathbf{j}|=1}^{N_s} A^s_{\boldsymbol\alpha \mathbf{j}}(\mathbf{x},\mathbf{u})
\prod_{k=1}^{|\boldsymbol\alpha|}\frac{\partial u_{\alpha_k}}{\partial x_{j_k}}+B^s(\mathbf{x},\mathbf{u})=0,\qquad s=1,\ldots,m,
\end{equation}
is mapped by an invertible point transformation, say
\begin{equation}
\label{source2target}
\mathbf{z}=\mathbf{Z}(\mathbf{x},\mathbf{u}), \qquad \mathbf{w}=\mathbf{W}(\mathbf{x},\mathbf{u}),
\end{equation}
to the equivalent nonlinear first order autonomous system having homogeneous polynomial form, say
\begin{equation}
\label{target}
\sum_{|\boldsymbol\alpha|,|\mathbf{j}|=\overline{N}_s} \widetilde{A}^s_{\boldsymbol\alpha\mathbf{j}}(\mathbf{w})
\prod_{k=1}^{\overline{N}_s}\frac{\partial w_{\alpha_k}}{\partial z_{j_k}}=0, \qquad s=1,\ldots,m,
\end{equation}
for some integers $\overline{N}_s$,
if and only if there exists an $(n+1)$--dimensional subalgebra of the Lie algebra of point symmetries, admitted by system (\ref{source}), spanned by the vector fields
\begin{equation}
\Xi_i=\sum_{i=1}^n\xi_i^j(\mathbf{x},\mathbf{u})\frac{\partial}{\partial x_i}+\sum_{\alpha=1}^m\eta_i^\alpha(\mathbf{x},\mathbf{u})\frac{\partial}{\partial u_\alpha}, \qquad i=1,\ldots,n+1,
\end{equation}
generating a distribution of rank $(n+1)$, and such that
\begin{equation}
\label{condalgebra}
\begin{aligned}
&\left[\Xi_i,\Xi_j\right]=0, \qquad &&i=1,\ldots,n-1,\qquad i<j\le n,\\
&\left[\Xi_i,\Xi_{n+1}\right]=\Xi_i, \qquad &&i=1,\ldots, n.
\end{aligned} 
\end{equation}
Moreover, the variables $\mathbf{w}$, which by construction are invariants of $\Xi_1$, $\ldots$, $\Xi_n$, 
have to be invariant with respect to $\Xi_{n+1}$ too.
\end{theorem}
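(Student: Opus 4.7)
The plan is to prove both directions, with sufficiency the substantive one. For the necessary direction I work directly on the target \eqref{target}: since that system is autonomous, the $n$ translations $\widetilde\Xi_i=\partial/\partial z_i$ are obviously point symmetries; and since each target equation $F^s=0$ is homogeneous of some degree $\overline{N}_s$ in the first derivatives, Euler's identity applied to the first prolongation $\sum_i z_i\partial_{z_i}-\sum_{\alpha,j}p_{\alpha j}\partial_{p_{\alpha j}}$ yields $-\overline{N}_s F^s$, so the dilation $\widetilde\Xi_{n+1}=\sum_i z_i\partial/\partial z_i$ is a symmetry as well. A direct computation confirms the commutation relations \eqref{condalgebra}, the rank $(n+1)$ property, and the invariance of the $w_\alpha$'s under all $n+1$ fields. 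Pulling these back through the inverse of \eqref{source2target} delivers the required symmetries of \eqref{source}.

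For sufficiency, I would construct the point transformation via canonical coordinates. Since $\Xi_1,\ldots,\Xi_n$ pairwise commute and span a rank-$n$ subdistribution of the full rank-$(n+1)$ distribution on the $(n+m)$-dimensional space of the variables $(\mathbf{x},\mathbf{u})$, a simultaneous rectification provides $m$ common functionally independent invariants $w_1(\mathbf{x},\mathbf{u}),\ldots,w_m(\mathbf{x},\mathbf{u})$ and $n$ canonical coordinates $z_1(\mathbf{x},\mathbf{u}),\ldots,z_n(\mathbf{x},\mathbf{u})$ with $\Xi_i(z_j)=\delta_{ij}$, so that in these coordinates $\Xi_i=\partial/\partial z_i$. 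Declaring the $z_i$'s to be the new independent variables and the $w_\alpha$'s the new dependent variables, I obtain an invertible point transformation of the form \eqref{map}.

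Next I would determine the normal form of $\Xi_{n+1}$ in these coordinates. Writing $\Xi_{n+1}=\sum_i a_i(z,w)\partial/\partial z_i+\sum_\alpha b_\alpha(z,w)\partial/\partial w_\alpha$, the relations $[\partial/\partial z_j,\Xi_{n+1}]=\partial/\partial z_j$ force $\partial a_i/\partial z_j=\delta_{ij}$ and $\partial b_\alpha/\partial z_j=0$, so $a_i=z_i+f_i(w)$ and $b_\alpha=g_\alpha(w)$. The hypothesis that the $w_\alpha$'s are also invariants of $\Xi_{n+1}$ eliminates the $g_\alpha$'s, and the redefinition $z_i\leftarrow z_i+f_i(w)$, which preserves $\Xi_i=\partial/\partial z_i$ because the $w_\alpha$'s are annihilated by $\partial/\partial z_j$, reduces $\Xi_{n+1}$ to the pure dilation $\sum_i z_i\partial/\partial z_i$.

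It remains to read off the structure of the transformed system. By the preceding lemma it is still polynomial in the derivatives $\partial w_\alpha/\partial z_j$; invariance under the translations $\partial/\partial z_i$ eliminates any residual $z$-dependence from its coefficients, giving an autonomous polynomial system. The step I expect to require the most care is the last one: decomposing each transformed equation into its homogeneous components $F^s=\sum_k F^s_k$ in the derivatives and using Euler's identity on the first prolongation yields $\Xi_{n+1}^{(1)}(F^s)=-\sum_k k F^s_k$, and the symmetry-invariance condition for $\Xi_{n+1}$ then forces every equation to collapse onto a single homogeneity degree $\overline{N}_s$. Arguing this collapse separately on each homogeneous piece, by exploiting that the decomposition $F^s=\sum_k F^s_k$ is unique at the polynomial level, is the crux of the proof and produces the target form \eqref{target}.
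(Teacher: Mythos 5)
Your proposal is correct and follows essentially the same route as the paper's own proof: rectify the $n$ commuting symmetries into translations $\partial/\partial z_i$ via canonical coordinates (the paper does this by an explicit field-by-field induction with correction functions, you invoke simultaneous rectification directly), use the commutators $[\Xi_i,\Xi_{n+1}]=\Xi_i$ plus the invariance of the $w_\alpha$'s to normalize $\Xi_{n+1}$ to the dilation $\sum_i z_i\,\partial/\partial z_i$, and conclude homogeneity from scaling invariance, with the translations and the dilation of the target system giving the converse. Your two refinements --- the explicit shift $z_i\leftarrow z_i+f_i(\mathbf{w})$ needed to remove the integration ``constants'' in $\Xi_{n+1}z_i=z_i+f_i(\mathbf{w})$, and the Euler-identity argument showing that each homogeneous component of a transformed equation must vanish separately --- are details the paper asserts without spelling out, not a different method.
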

\begin{proof}
Suppose the conditions of Theorem \ref{th_1} are satisfied, and so the system (\ref{source}) admits an 
$(n+1)$--dimensional algebra as subalgebra of the algebra of its Lie point symmetries generating a distribution of rank $(n+1)$ and verifying the structure conditions (\ref{condalgebra}).
Let us introduce a set of canonical variables for the vector field $\Xi_1$, say
\[
y^1_i\quad (i=1,\ldots,n), \qquad
v^1_\alpha\quad (\alpha=1,\ldots,m),
\]
such that
\[
\Xi _{1} y^1_1=1, \qquad \Xi _{1} y^1_{i_1}=0,\qquad \Xi _{1} v^1_\alpha=0
\]
($i_1=2,\ldots,n$); as a consequence, $\Xi_{1}$  takes the form
\[
\Xi_{ 1}={\frac{\partial}{\partial y^1_{ 1}}},
\]
\emph{i.e.}, it corresponds to a translation in the variable $y^1_1$.

Since $[\Xi_{1},\Xi_{2}]=0$,  it is
\begin{equation}
\Xi_{1} (\Xi _{2} y^1_i)= \Xi_{ 2} (\Xi_{1} y^1_i )=0, \qquad
\Xi_{1} (\Xi _{2} v^{1}_{\alpha} )=\Xi_{2} (\Xi _{1} v^{1}_{\alpha})=0,
\end{equation}
for $i=1,\ldots,n$ and $\alpha=1,\ldots,m$.
Thus, the infinitesimals of $\Xi_{2}$, represented in terms of the canonical variables of $\Xi_1$,  
will depend upon the invariants of $\Xi_1$ only, \emph{i.e.}, $\Xi_2$ writes as
\begin{equation}
\Xi_{ 2} =\sum_{i=1}^{n}\Theta ^{2}_{i} (y^{1}_{j_{1}}, v^{1}_{\beta}) {\frac{\partial}{\partial y^1_{i}}} 
+\sum_{\alpha=1}^{m}
\Lambda^{2}_{ \alpha} (y^{1}_{j_{ 1}}, v^{1}_{\beta}) {\frac{\partial}{\partial v^{1}_{ \alpha} }}
\end{equation}
($j_{ 1} =2,\ldots,n$, $\beta=1,\ldots,m$).

If $\Theta ^{2}_{1} \neq 0$ we need to replace $y^1_1$ with
\[
y^1_{ 1} +\varphi^1_{1} (y^{1}_{ j_{1} }, v^{1}_{\beta}),
\]
where the function $\varphi^1_{ 1}$  satisfies
\[
\Theta^{2}_{1}(y^{1}_{j_{1}}, v^{1}_{\beta})+\sum_{i_{1} =2}^{n} \Theta^{2}_{i_{1}}(y^{1}_{j_{1}},v^1_\beta)
{\frac{\partial\varphi^1 _{1}}{\partial y^{1}_{i_{1}}}}+\sum_{\alpha =1}^{m} \Lambda^{2}_{\alpha}(y^{1}_{j_{1}},v^1_\beta)
{\frac{\partial\varphi^1 _{1}}{\partial v^{1}_{\alpha}}} =0.
\]

That enables us to write $\Xi_{1}$ and $\Xi_{2}$  as follows
\begin{equation}
\Xi_{1} ={\frac{\partial }{\partial y^{1}_{ 1} }} ,\qquad
\Xi_{2}=\sum_{i_{1}=2}^{n}\Theta^{2}_{ i_{1}}(y^{1}_{j_{1}},v^1_\beta)
{\frac{\partial }{\partial y^{1}_{i_{1}}}} + \sum_{\alpha=1} ^{m}
\Lambda ^{2}_{ \alpha} (y^{1}_{j_{1}}, v^{1}_{\beta}) {\frac{\partial}{\partial v^{1}_{\alpha} }},
\end{equation}
where $j_{1} =2,\ldots,n$.

Introducing the canonical variables
\[
y^2_1=y^1_1,\qquad y^2_2,\qquad
y^{2}_{i_{2}}\quad (i_{ 2} =3,\ldots,n), \qquad
v^{2}_{\alpha}\quad(\alpha=1,\ldots,m),
\]
such that
\[
\Xi _{2} y^2_2=1, \qquad \Xi _{2} y^2_{i_2}=0,\qquad \Xi _{2} v^2_\alpha=0
\]
($i_1=2,\ldots,n$),
 it is obtained
\[
\Xi_2={\frac{\partial}{\partial y^2_2}}.
\]

Continuing inductively for $k=2,\ldots,n-1$, since $\Xi_{k+1}$ commutes with $\Xi_1$, $\ldots$,
$\Xi_k$, in terms of the canonical variables
\[
y^{k}_i\quad (i=1,\ldots, n), \qquad
v^{k}_{\alpha}\quad(\alpha=1,\ldots,m),
\]
we have
\begin{equation}
\begin{aligned}
&\Xi_{1} ={\frac{\partial }{\partial y^{k}_{1}}}, \qquad \Xi_{2}
={\frac{\partial }{\partial y^{k}_{2}}} ,\qquad \dots\dots,\qquad
\Xi_{k} ={\frac{\partial }{\partial y^{k}_{k}}}, \\
&\Xi_{k+1}=\sum_{i=1}^{n}\Theta^{k+1}_{i}(y^{k}_{j_{k}},v^k_\beta)
{\frac{\partial}{\partial y^{k}_{i}}}+\sum_{\alpha=1}^{m}
\Lambda^{k+1}_{\alpha}(y^{k}_{j_{k}}, v^{k}_{\beta} ) {\frac{\partial
}{\partial v^{k}_{ \alpha} }},
\end{aligned}
\end{equation}
where $j_{k}=k+1,\ldots,n$. 
If $\Theta ^{k+1}_{\ell} \neq 0$, for $\ell=1,\ldots,k$, we need to replace the variable
$y^k_\ell$ with
\[
y^{k}_{\ell} +\varphi^k_{\ell} (y^{k}_{ j_{k} },v^k_\beta),
\]
where the function $\varphi^k_{\ell}$  satisfies
\[
\Theta^{k+1}_{\ell}(y^{k}_{j_{k}},v^k_\beta)+\sum_{i_{k} =k+1}^{n} \Theta^{k+1}_{i_{k}}(y^{k}_{j_{k}},v^k_\beta)
{\frac{\partial\varphi^k _{\ell}}{\partial y^{1}_{i_{k}}}}+\sum_{\alpha=1}^m \Lambda^{k+1}_\alpha(y^k_{j_k},v^k_\beta)
\frac{\partial \varphi^k_\ell}{\partial v^{k}_\alpha} =0,
\]
so that $\Xi_{k+1}$ writes as
\begin{equation}
\Xi_{k+1}=\sum_{i_k=k+1}^{n}\Theta^{k+1}_{i_k}(y^{k}_{j_{k}},v^k_\beta)
{\frac{\partial}{\partial y^{k}_{i_k}}}+\sum_{\alpha=1}^{m}
\Lambda^{k+1}_{\alpha}(y^{k}_{j_{k}}, v^{k}_{\beta} ) {\frac{\partial
}{\partial v^{k}_{\alpha} }};
\end{equation}
hence, we may construct the canonical variables
\[
y^{k+1}_1=y^k_1,\;\ldots,\; y^{k+1}_k=y^k_k,\quad
y^{k+1}_{i_{k}}\; (i_{ k} =k+1,\ldots,n), \qquad
v^{k+1}_{\alpha}\;(\alpha=1,\ldots,m),
\]
related to the operator $\Xi_{k+1}$, such that the latter
writes as
\[
\Xi_{k+1}={\frac{\partial}{\partial y^{k+1}_{k+1}}}.
\]

The complete application of the described algorithm
enables us to write each operator $\Xi _{i}$  in the form
\[
\Xi_{ i} ={\frac{\partial }{\partial z_{i}}}, \qquad i=1,\ldots,n,
\]
and the new independent and dependent variables are $z_{i} =y^{n}_{i}$ ($i=1,\ldots, n$), $w_\alpha=v^n_\alpha$ ($\alpha=1,\ldots,m$), respectively.

Therefore, what we have obtained is a variable transformation like
(\ref{source2target}) allowing to write the system (\ref{source}) in
autonomous form.

Finally, since $[\Xi_i,\Xi_{n+1}]=\Xi_i$ ($i=1,\ldots,n$),  it is
\begin{equation}
\begin{aligned}
&\Xi_i(\Xi_{n+1}z_j)=\Xi_{n+1}(\Xi_i z_j)+\Xi_i z_j=\delta_{ij},\\
&\Xi_i(\Xi_{n+1}w_\alpha)=\Xi_{n+1}(\Xi_i w_\alpha)+\Xi_i w_\alpha=0;
\end{aligned}
\end{equation}
where $\delta_{ij}$ is the Kronecker symbol; these relations, together with the
hypothesis that the variables $w_\alpha$ ($\alpha=1,\ldots,m$)
are invariant with respect to $\Xi_{n+1}$, allow the vector field $\Xi_{n+1}$ to gain the  representation
\begin{equation}
\Xi_{n+1}=\sum_{j=1}^n z_j\frac{\partial}{\partial z_j}.
\end{equation}
As a consequence, since the resulting system, written in the variables $\mathbf{z}$ and $\mathbf{w}$, is 
autonomous and polynomial in the derivatives, and is invariant with respect to a uniform scaling of all independent variables, then it
necessarily must be polynomially homogeneous in the derivatives, 
\emph{i.e.}, it has the form (\ref{target}).

The condition that the symmetries generate a distribution of rank $(n+1)$, whence the vector fields spanning the $n$--dimensional Abelian Lie subalgebra generate a distribution of rank $n$,  ensures that we may construct the complete set of the new independent variables $\mathbf{z}$.

Conversely, if the nonautonomous and/or nonhomogeneous 
system (\ref{source}) can be
mapped by the invertible point transformation (\ref{source2target}) to
the autonomous system polynomially homogeneous in the derivatives (\ref{target}), then, since the latter
admits the $n$ vector fields $\displaystyle\frac{\partial}{\partial z_{i}}$,
spanning an $n$--dimensional Abelian Lie algebra, and the vector field
$\displaystyle\sum_{j=1}^n z_j\frac{\partial}{\partial z_{j}}$, then it follows
that also the system (\ref{source}) must admit $(n+1)$ Lie point symmetries
with the requested algebraic structure. 
\end{proof}

\begin{example}
\label{mov}
Let us consider the first order system made by the equations
\begin{equation}
\label{Oliv_sys}
\begin{aligned}
& \frac{\partial u_1}{\partial x_2}-\frac{\partial u_2}{\partial x_1}=0,\\
& \kappa_1 \left(\frac{\partial u_1}{\partial x_2}\right)^4+\left(\kappa_2 \frac{\partial u_1}{\partial x_1}\frac{\partial u_2}{\partial x_2}+\kappa_3 \left(\frac{\partial u_1}{\partial x_2}\right)^2\right)\frac{\partial u_1}{\partial x_1}\frac{\partial u_2}{\partial x_2}
\\
&+\left(\kappa_4 \frac{\partial u_1}{\partial x_1}\frac{\partial u_2}{\partial x_2}+\kappa_5 \left(\frac{\partial u_1}{\partial x_2}\right)^2\right)\frac{\partial u_1}{\partial x_1}+\left(\kappa_6 \frac{\partial u_1}{\partial x_1}\frac{\partial u_2}{\partial x_2}+\kappa_7 \left(\frac{\partial u_1}{\partial x_2}\right)^2\right)\frac{\partial u_1}{\partial x_2}\\
&+\left(\kappa_8 \frac{\partial u_1}{\partial x_1}\frac{\partial u_2}{\partial x_2}
+\kappa_9 \left(\frac{\partial u_1}{\partial x_2}\right)^2\right)\frac{\partial u_2}{\partial x_2}
+\kappa_{10}\left(\frac{\partial u_1}{\partial x_1}\right)^2
+\kappa_{11}\frac{\partial u_1}{\partial x_1}\frac{\partial u_1}{\partial x_2}\\
&+\kappa_{12}\frac{\partial u_1}{\partial x_1}\frac{\partial u_2}{\partial x_2}+\kappa_{13}\left(\frac{\partial u_1}{\partial x_2}\right)^2+\kappa_{14}\frac{\partial u_1}{\partial x_2}\frac{\partial u_2}{\partial x_2}+\kappa_{15}\left(\frac{\partial u_2}{\partial x_2}\right)^2=0,
\end{aligned}
\end{equation}
with $u_1(x_1,x_2)$, $u_2(x_1,x_2)$ scalar functions, and $\kappa_i\left(u_{1},u_{2}\right)$ $(i=1,\ldots ,15)$ arbitrary smooth functions of the indicated arguments.

It can be easily ascertained that system (\ref{Oliv_sys}) admits the Lie point symmetries spanned by the operators
\begin{equation}
\begin{aligned}
&\Xi_1=\frac{\partial}{\partial x_1},\qquad \Xi_2=\frac{\partial}{\partial x_2},\\
&\Xi_3=\left(x_1-a u_1-b u_2 \right)\frac{\partial}{\partial x_1}+\left(x_2-b u_1- c u_2\right)\frac{\partial}{\partial x_2},
\end{aligned}
\end{equation}
$a$, $b$, $c$ being constants, provided that the conditions 
\begin{equation}
\label{Oliv_kappa}
\begin{aligned}
&\kappa_1-c^2 \kappa_{10}+ b c \kappa_{11}-a c \kappa_{12}-b^2 \kappa_{13}+a b \kappa_{14}-a^2 \kappa_{15}=0,\\
&\kappa_2-c^2 \kappa_{10}+ b c \kappa_{11}-a c \kappa_{12}-b^2 \kappa_{13}+a b \kappa_{14}-a^2 \kappa_{15}=0,\\
&\kappa_3+2(c^2 \kappa_{10}- b c \kappa_{11}+a c \kappa_{12}+b^2 \kappa_{13}-a b \kappa_{14}+a^2 \kappa_{15})=0,\\
&\kappa_4+2 c \kappa_{10}- b \kappa_{11}+a \kappa_{12}=0,\\
&\kappa_5-2 c \kappa_{10}+ b \kappa_{11}-a \kappa_{12}=0,\\
&\kappa_6+ c \kappa_{11}-2 b \kappa_{13}+a \kappa_{14}=0,\\
&\kappa_7- c \kappa_{11}+2 b \kappa_{13}-a \kappa_{14}=0,\\
&\kappa_8+ c \kappa_{12}- b \kappa_{14}+2 a \kappa_{15}=0,\\
&\kappa_9- c \kappa_{12}+ b \kappa_{14}-2 a \kappa_{15}=0
\end{aligned}
\end{equation}
are satisfied.
Since
\begin{equation}
\left[\Xi_1,\Xi_2\right]=0,\qquad \left[\Xi_1,\Xi_3\right]=\Xi_1,\qquad \left[\Xi_2,\Xi_3\right]=\Xi_2,
\end{equation}
applying the theorem, we introduce the new independent and dependent variables
\begin{equation}
\label{newvars_oli}
\begin{aligned}
&z_1=x_1-a u_1-b u_2, \quad && z_2=x_2-b u_1-c u_2,\\
&w_1  =u_1,   &&w_2=u_2,
\end{aligned}
\end{equation}
and the nonlinear system (\ref{Oliv_sys}) 
reduces to
\begin{equation}
\label{Oli_final}
\begin{aligned}
& \frac{\partial w_1}{\partial z_2}-\frac{\partial w_2}{\partial z_1}=0,\\
&\kappa_{10}\left(\frac{\partial w_1}{\partial z_1}\right)^2+\kappa_{11}\frac{\partial w_1}{\partial z_1}\frac{\partial w_1}{\partial z_2}+\kappa_{12}\frac{\partial w_1}{\partial z_1}\frac{\partial w_2}{\partial z_2}\\
&\quad+\kappa_{13}\left(\frac{\partial w_1}{\partial z_2}\right)^2
+\kappa_{14}\frac{\partial w_1}{\partial z_2}\frac{\partial w_2}{\partial z_2}+\kappa_{15}\left(\frac{\partial w_2}{\partial z_2}\right)^2=0,
\end{aligned}
\end{equation}
\emph{i.e.}, reads as an autonomous system polynomially homogeneous in the derivatives. 

We notice that system (\ref{Oli_final}), by specializing the functions $\kappa_{10},\ldots, \kappa_{15}$ as follows, 
\begin{equation}
\begin{aligned}
& \kappa_{10}=-\kappa(1+w_2^2)^2,\\
& \kappa_{11}=4\kappa w_1 w_2(1+w_2^2),\\
& \kappa_{12}=2((2-\kappa)(1+w_1^2+w_2^2)-\kappa w_1^2 w_2^2),\\
& \kappa_{13}=-4(1+w_1^2+w_2^2+\kappa w_1^2 w_2^2),\\
& \kappa_{14}=4\kappa w_1 w_2(1+w_1^2),\\ 
& \kappa_{15}=-\kappa(1+w_1^2)^2,\
\end{aligned}
\end{equation}
is equivalent to the second order partial differential equation
\begin{equation}
\label{mysurf}
\begin{aligned}
&\kappa\left(1+w_{z_2}^2\right)^2
w_{z_1z_1}^2-
4\kappa w_{z_1} w_{z_2}\left(1+w_{z_2}^2\right)w_{z_1z_1}
w_{z_1z_2}\\
&-2\left((2-\kappa)\left(1+w_{z_1}^2+w_{z_2}^2\right)-\kappa 
w_{z_1}^2w_{z_2}^2\right)w_{z_1z_1}w_{z_2z_2}\\
&+4\left(1+w_{z_1}^2+w_{z_2}^2+\kappa w_{z_1}^2w_{z_2}^2\right)
w_{z_1z_2}^2-4\kappa w_{z_1}w_{z_2}\left(1+w_{z_1}^2\right)
w_{z_1z_2}w_{z_2z_2}\\
&+\kappa\left(1+w_{z_1}^2\right)^2
w_{z_2z_2}^2=0,
\end{aligned}
\end{equation} 
where $\displaystyle w_{z_1}=\frac{\partial w}{\partial z_1}=w_1$, 
$\displaystyle w_{z_2}=\frac{\partial w}{\partial z_2}=w_2$, and  
$\kappa$ is an arbitrary function of $w_{z_1}$ and $w_{z_2}$.

Considering a smooth surface in $\mathbb{R}^3$ with the metric  
$ds^2=dz_1^2+ dz_2^2+dw^2$, and its Gaussian and mean curvature,
\begin{equation}
\begin{aligned}
&G=\frac{w_{z_1z_1}w_{z_2z_2}-w_{z_1z_2}^2}{(1+w_{z_1}^2+w_{z_2}^2)^{2}},\\
&H=\frac{1}{2}\frac{(1+w_{z_2}^2) w_{z_1z_1}-2 w_{z_1} w_{z_2} w_{z_1z_2}+(1+w_{z_1}^2) w_{z_2z_2}}{(1+w_{z_1}^2+w_{z_2}^2)^{3/2}},
\end{aligned}
\end{equation}
respectively, equation (\ref{mysurf}) can be written as
\begin{equation}
\label{surface}
G=\kappa H^2,
\end{equation}
whereupon it should be $\kappa(w_{z_1},w_{z_2})\le 1$.
In the limit case $\kappa\equiv 1$, Eq.~(\ref{surface}) characterizes a surface with all its points umbilic; it is known that a surface with all its point umbilic is a (open) domain of a plane or a sphere \cite{Pressley}.  
It is worth of being remarked that Eq.~(\ref{surface}) with $\kappa\equiv1$ is 
\emph{strongly Lie remarkable} \cite{MOVJMAA2007}, since it is the unique second 
order partial differential equation uniquely characterized by the conformal Lie algebra 
in $\mathbb{R}^3$ \cite{MOVTMP2007}.
\end{example}

\begin{remark}
Notice that the Example~\ref{mov} provides a system polynomial in the derivatives 
which is transformed into a system polynomially homogeneous of degree 2 in the derivatives. In 
Section~\ref{applications}, we will analyze various first order Monge--Amp\`ere  systems, and show that they can be transformed to quasilinear form.
\end{remark}

\section{Applications}
\label{applications}
In this section, we provide some examples of first order nonlinear systems polynomial in the  derivatives whose Lie symmetries satisfy the conditions of Theorem~\ref{th_1}, 
and prove that they can be transformed under suitable conditions to autonomous first order systems having homogeneous polynomial form; the systems that will be considered are of Monge--Amp\`ere type, and, remarkably, they are reduced to quasilinear (or linear) form.

In particular, we are concerned with the nonlinear first order systems of Monge--Amp\`ere equations for the unknowns $u_\alpha(x_i)$ $(\alpha=1,\ldots,m,\;i=1,\ldots,n)$. These systems have been characterized 
by  Boillat in 1997 \cite{Boillat.2} by looking for the nonlinear first order systems possessing, as the quasilinear systems, the property of the linearity of the Cauchy problem. 
These systems are also completely exceptional \cite{Lax,Boillat}, and are made by equations which are 
expressed as linear combinations (with coefficients depending at most on the independent and the dependent variables) of all minors extracted from the gradient matrix of $u_\alpha=u_\alpha(x_i)$.

Hereafter, to shorten the formulas, we denote with $u_{\alpha,i}$ the first order partial derivative of $u_\alpha(x_i)$ with respect to $x_i$, and with $w_{\alpha,i}$ the first order partial derivative of $w_\alpha(z_i)$ with respect to $z_i$; moreover, we denote with $f_{i;\alpha}$ the first order partial derivative of the function $f_i$ with respect to $u_\alpha$ (or $w_\alpha$). 
In the following we limit ourselves to consider the coefficients of the Monge--Amp\`ere systems at most functions
of the field variables.

\subsection{Case $m=n=2$}
Let us consider the nonlinear first order system of  Monge--Amp\`ere made by the equations
\begin{equation}
\label{monge22}
\kappa^i_0\left(u_{1,1}u_{2,2}-u_{1,2}u_{2,1}\right)+\kappa^i_1u_{1,1}+\kappa^i_2u_{1,2}+\kappa^i_3u_{2,1}+\kappa^i_4u_{2,2}+\kappa^i_5=0 
\end{equation}
$(i=1,2)$, with $u_1(x_1,x_2),u_2(x_1,x_2)$ scalar functions, and $\kappa^i_j\left(u_{1},u_{2}\right)$ $(i=1,2; \; j=0,\ldots ,5)$ arbitrary smooth functions of the indicated arguments.

The substitutions
\begin{equation}
u_1\rightarrow u_1+\alpha_{11}x_1+\alpha_{12}x_2, \qquad u_2\rightarrow u_2+\alpha_{21}x_1+\alpha_{22}x_2,  
\end{equation}
where $\alpha_{ij}$ are arbitrary constants, produce a system with $\kappa^i_5=0$ $(i=1,2)$ provided that
\begin{equation}
\label{condkappa5}
\kappa^i_0(\alpha_{11}\alpha_{22}-\alpha_{12}\alpha_{21})+\kappa^i_1\alpha_{11}+\kappa^i_2\alpha_{12}+\kappa^i_3\alpha_{21}+\kappa^i_4\alpha_{22}+\kappa^i_5=0
\end{equation}
$(i=1,2)$. Conditions (\ref{condkappa5})  provide two constraints on the functional form of the coefficients so that not all systems can be written in a form where $\kappa^i_5=0$; however, if the coefficients $\kappa^i_j$ are constant,
due to the arbitrariness of the constants $\alpha_{ij}$, then (\ref{condkappa5}) can always be satisfied whatever the values of the coefficients are.

It is easily recognized that system (\ref{monge22}), now taken with $\kappa^i_5=0$, admits the Lie point symmetries spanned by the operators
\begin{equation}
\Xi_1=\frac{\partial}{\partial x_1},\qquad
\Xi_2=\frac{\partial}{\partial x_2},\qquad
\Xi_3=\left(x_1-f_{1}\right)\frac{\partial}{\partial x_1}+\left(x_2-f_{2}\right)\frac{\partial}{\partial x_2},
\end{equation}
where $f_i(u_1,u_2)$ $(i=1,2)$ are arbitrary smooth functions of their arguments, provided that 
\begin{equation}
\label{condkappa0}
\kappa^i_0+\kappa^i_1f_{2;2}-\kappa^i_2f_{1;2}-\kappa^i_3f_{2;1}+\kappa^i_4f_{1;1}=0, \qquad i=1,2.
\end{equation}
The constraints (\ref{condkappa0}), once we assign the 10 functions $k^i_j(u_1,u_2)$ ($i=1,2$, $j=0,\ldots,4$),
are the differential equations providing us the functional form of $f_1(u_1,u_2)$ and $f_2(u_1,u_2)$.

In the case where all the coefficients $\kappa^i_j$ are constant, then the functions $f_1$ and $f_2$ are forced to be linear, \emph{i.e.},
\begin{equation}
f_1=\beta_{11}u_1+\beta_{12}u_2, \qquad f_2=\beta_{21}u_1+\beta_{22}u_2,
\end{equation} 
$\beta_{ij}$ being constants whose value is determined by the coefficients $\kappa^i_j$.

Since
\begin{equation}
\left[\Xi_1,\Xi_2\right]=0,\qquad \left[\Xi_1,\Xi_3\right]=\Xi_1,\qquad \left[\Xi_2,\Xi_3\right]=\Xi_2,
\end{equation}
we introduce the new variables
\begin{equation}
\label{newvars11}
z_1=x_1-f_{1},\qquad z_2=x_2-f_{2},\qquad w_1=u_1, \qquad w_2=u_2,
\end{equation}
and the generators of the point symmetries write as
\begin{equation}
\Xi_1=\frac{\partial}{\partial z_1}, \qquad \Xi_2=\frac{\partial}{\partial z_2}, \qquad
\Xi_3=z_1\frac{\partial}{\partial z_1}+z_2\frac{\partial}{\partial z_2}.
\end{equation}
In terms of the new variables (\ref{newvars11}),  the nonlinear system (\ref{monge22}) 
becomes
\begin{equation}
\label{monge22final}
\kappa^i_1w_{1,1}+\kappa^i_2w_{1,2}+\kappa^i_3w_{2,1}+\kappa^i_4w_{2,2}=0,
\end{equation}
\emph{i.e.}, reads as an autonomous and homogeneous quasilinear system. This system is linear if all
the coefficients $\kappa^i_j$ are constant; nevertheless, since it is a $2\times 2$ homogeneous and autonomous
quasilinear system, it can be written in linear form by means of the hodograph transformation also when
the coefficients $\kappa^i_j$ depend on $u_1$ and $u_2$.

In conclusion, all Monge--Amp\`ere systems with $m=n=2$ can be reduced to a linear system when the coefficients $\kappa^i_j$ are constant; on the contrary, when the coefficients depend upon $u_1$ and $u_2$ the reduction to the linear form is possible provided that the constraints (\ref{condkappa5}) are satisfied.

\subsection{Case $m=2,\,n=3$}
By considering the gradient matrix of $u_\alpha(x_i)$ $(\alpha=1,2,\;i=1,\ldots,3)$
\begin{equation}
H=\left(\begin{array}{ccc}
u_{1,1} & u_{1,2} & u_{1,3} \\ 
u_{2,1} & u_{2,2} & u_{2,3}
\end{array} \right)
\end{equation}
and its extracted minors
\begin{equation}
H^1=\left|\begin{array}{cc}
u_{1,1} & u_{1,2}  \\ 
u_{2,1} & u_{2,2}
\end{array} \right|,\qquad
H^2=\left|\begin{array}{cc}
u_{1,1} & u_{1,3}  \\ 
u_{2,1} & u_{2,3}
\end{array} \right|,\qquad
H^3=\left|\begin{array}{cc}
u_{1,2} & u_{1,3}  \\ 
u_{2,2} & u_{2,3}
\end{array} \right|,
\end{equation}
the nonlinear first order system of  Monge--Amp\`ere is made by equations like
\begin{equation}
\label{monge23}
\begin{aligned}
&\kappa^i_1H^1+\kappa^i_2 H^2+\kappa^i_3 H^3\\
&\quad+\kappa^i_4u_{1,1}+\kappa^i_5u_{1,2}+\kappa^i_6u_{1,3}+\kappa^i_7u_{2,1}+\kappa^i_8u_{2,2}+\kappa^i_9u_{2,3}+\kappa^i_{10}=0,
\end{aligned}
\end{equation}
$\kappa^i_j\left(u_{\alpha}\right)$ $(i=1,2,\; j=1,\ldots ,10)$ arbitrary smooth functions of the indicated arguments.

The substitutions
\begin{equation}
u_1\rightarrow u_1+\alpha_{11}x_1+\alpha_{12}x_2+\alpha_{13}x_3, \qquad u_2\rightarrow u_2+\alpha_{21}x_1+\alpha_{22}x_2+\alpha_{23}x_3,  
\end{equation}
where $\alpha_{ij}$ are arbitrary constants, produce a system with $\kappa^i_{10}=0$ $(i=1,2)$
provided that 
\begin{equation}
\label{condkappa10}
\begin{aligned}
&\kappa^i_1(\alpha_{11}\alpha_{22}-\alpha_{12}\alpha_{21})+
\kappa^i_2(\alpha_{11}\alpha_{23}-\alpha_{13}\alpha_{21})+
\kappa^i_3(\alpha_{12}\alpha_{23}-\alpha_{13}\alpha_{22})\\
&\quad+\kappa^i_4\alpha_{11}+\kappa^i_5\alpha_{12}+\kappa^i_6\alpha_{13}+\kappa^i_7\alpha_{21}+\kappa^i_8\alpha_{22}+\kappa^i_9\alpha_{23}+\kappa^i_{10}=0
\end{aligned}
\end{equation}
$(i=1,2)$. Actually, conditions (\ref{condkappa10}) can always be satisfied when the coefficients $\kappa^i_j$
are constant because of the arbitrariness of the constants $\alpha_{ij}$.
 
This nonlinear system (\ref{monge23}), now taken with $\kappa^i_{10}=0$, admits the Lie point symmetries spanned by the operators
\begin{equation}
\label{sym23}
\begin{aligned}
&\Xi_1=\frac{\partial}{\partial x_1},\qquad \Xi_2=\frac{\partial}{\partial x_2},\qquad \Xi_3=\frac{\partial}{\partial x_3},\\
&\Xi_4=\left(x_1-f_{1}\right)\frac{\partial}{\partial x_1}+\left(x_2-f_{2}\right)\frac{\partial}{\partial x_2}+\left(x_3-f_{3}\right)\frac{\partial}{\partial x_3},
\end{aligned}
\end{equation}
where $f_i(u_1,u_2)$ $(i=1,\ldots,3)$ are arbitrary smooth functions of their arguments, provided that 
\begin{equation}
\label{condkappa123}
\begin{aligned}
&\kappa^i_1+\kappa^i_4f_{2;2}-\kappa^i_5f_{1;2}-\kappa^i_7f_{2;1}+\kappa^i_8f_{1;1}=0,\\
&\kappa^i_2+\kappa^i_4f_{3;2}-\kappa^i_6f_{1;2}-\kappa^i_7f_{3;1}+\kappa^i_9f_{1;1}=0,\\
&\kappa^i_3+\kappa^i_5f_{3;2}-\kappa^i_6f_{2;2}-\kappa^i_8f_{3;1}+\kappa^i_9f_{2;1}=0.
\end{aligned}
\end{equation}
The six conditions (\ref{condkappa123}) cannot be fulfilled for an arbitrary choice of the coefficients $\kappa^i_j$. In the simplest case, where the coefficients $\kappa^i_j$ are constant, they can be always satisfied and the functions $f_i$ must be linear:
\begin{equation}
f_1=\beta_{11} u_1+\beta_{12}u_2,\quad f_2=\beta_{21} u_1+\beta_{22}u_2,
\quad f_3=\beta_{31} u_1+\beta_{32}u_2,
\end{equation}
$\beta_{ij}$ being arbitrary constants.

The symmetries (\ref{sym23}) generate a 4--dimensional solvable Lie algebra,
\begin{equation}
\left[\Xi_i,\Xi_j\right]=0,\qquad \left[\Xi_i,\Xi_4\right]=\Xi_i,
\qquad (i,j=1,2,3),
\end{equation}
whereupon we may introduce the new variables
\begin{equation}
\label{newvars23}
\begin{array}{lll}
z_1=x_1-f_{1},\quad &z_2=x_2-f_{2},\quad & z_3= x_3-f_{3},\\ 
w_1=u_1,\quad  &w_2=u_2,
\end{array}
\end{equation}
and the generators of the point symmetries write as
\begin{equation}
\Xi_1=\frac{\partial}{\partial z_1}, \quad \Xi_2=\frac{\partial}{\partial z_2}, \quad
\Xi_3=\frac{\partial}{\partial z_3}, \quad
\Xi_4=z_1\frac{\partial}{\partial z_1}+z_2\frac{\partial}{\partial z_2}
+z_3\frac{\partial}{\partial z_3}. 
\end{equation}

In terms of the new variables (\ref{newvars23}), the nonlinear system (\ref{monge23}) reduces to
\begin{equation}
\kappa^i_4w_{1,1}+\kappa^i_5w_{1,2}+\kappa^i_6w_{1,3}+\kappa^i_7w_{2,1}+\kappa^i_8w_{2,2}+\kappa^i_9w_{2,3}=0,
\end{equation}
\emph{i.e.}, reads as an autonomous and homogeneous quasilinear system.

\subsection{Case $m=3,\,n=2$}
By considering the gradient matrix of $u_\alpha(x_i)$ $(\alpha=1,\ldots,3,\;i=1,2)$
\begin{equation}
H=\left(\begin{array}{cc}
u_{1,1} & u_{1,2}  \\ 
u_{2,1} & u_{2,2} \\
u_{3,1} & u_{3,2} \\
\end{array} \right)
\end{equation}
and its extracted minors
\begin{equation}
H^1=\left|\begin{array}{cc}
u_{1,1} & u_{1,2}  \\ 
u_{2,1} & u_{2,2}
\end{array} \right|,\qquad
H^2=\left|\begin{array}{cc}
u_{1,1} & u_{1,2}  \\ 
u_{3,1} & u_{3,2}
\end{array} \right|,\qquad
H^3=\left|\begin{array}{cc}
u_{2,1} & u_{2,2}  \\ 
u_{3,1} & u_{3,2}
\end{array} \right|,
\end{equation}
the nonlinear first order system of  Monge--Amp\`ere is made by equations like
\begin{equation}
\begin{aligned}\label{monge32}
&\kappa^i_1H^1+\kappa^i_2 H^2+\kappa^i_3 H^3\\
&\quad+\kappa^i_4u_{1,1}+\kappa^i_5u_{1,2}+\kappa^i_6u_{2,1}+\kappa^i_7u_{2,2}+\kappa^i_8u_{3,1}+\kappa^i_9u_{3,2}+\kappa^i_{10}=0,
\end{aligned}
\end{equation}
$\kappa^i_j\left(u_{\alpha}\right)$ $(i=1,\ldots,3,\; j=1,\ldots ,10)$ arbitrary smooth functions of the indicated arguments.

The substitutions
\begin{equation}
\begin{aligned}
&u_1\rightarrow u_1+\alpha_{11}x_1+\alpha_{12}x_2,\\
&u_2\rightarrow u_2+\alpha_{21}x_1+\alpha_{22}x_2,\\  
&u_3\rightarrow u_3+\alpha_{31}x_1+\alpha_{32}x_2,\\
\end{aligned}
\end{equation}
where $\alpha_{ij}$ are arbitrary constants, produce a system with $\kappa^i_{10}=0$ $(i=1,\ldots,3)$
provided that 
\begin{equation}
\label{condkappa10_2}
\begin{aligned}
&\kappa^i_1(\alpha_{11}\alpha_{22}-\alpha_{12}\alpha_{21})+
\kappa^i_2(\alpha_{11}\alpha_{32}-\alpha_{12}\alpha_{31})+
\kappa^i_3(\alpha_{21}\alpha_{32}-\alpha_{22}\alpha_{31})\\
&\quad+\kappa^i_4\alpha_{11}+\kappa^i_5\alpha_{12}+\kappa^i_6\alpha_{21}+\kappa^i_7\alpha_{22}+\kappa^i_8\alpha_{31}+\kappa^i_9\alpha_{32}+\kappa^i_{10}=0
\end{aligned}
\end{equation}
$(i=1,\ldots,3)$. Also in this case, conditions (\ref{condkappa10_2}) can always be satisfied when the coefficients 
$\kappa^i_j$ are constant because of the arbitrariness of the constants $\alpha_{ij}$.

This nonlinear system (\ref{monge32}), now taken with $\kappa^i_{10}=0$, admits the Lie point symmetries spanned by the operators
\begin{equation}
\label{sym32}
\Xi_1=\frac{\partial}{\partial x_1},\qquad \Xi_2=\frac{\partial}{\partial x_2},\qquad \Xi_3=\left(x_1-f_{1}\right)\frac{\partial}{\partial x_1}+\left(x_2-f_{2}\right)\frac{\partial}{\partial x_2},
\end{equation}
where $f_i(u_1,u_2,u_3)$ $(i=1,2)$ are arbitrary smooth functions of their arguments, provided that 
\begin{equation}
\label{condkappa123_2}
\begin{aligned}
&\kappa^i_1+\kappa^i_4f_{2;2}-\kappa^i_5f_{1;2}-\kappa^i_6f_{2;1}+\kappa^i_7f_{1;1}=0,\\
&\kappa^i_2+\kappa^i_4f_{2;3}-\kappa^i_5f_{1;3}-\kappa^i_8f_{2;1}+\kappa^i_9f_{1;1}=0,\\
&\kappa^i_3+\kappa^i_6f_{2;3}-\kappa^i_7f_{1;3}-\kappa^i_8f_{2;2}+\kappa^i_9f_{1;2}=0.
\end{aligned}
\end{equation}
The six conditions (\ref{condkappa123_2}) cannot be fulfilled for an arbitrary choice of the coefficients $\kappa^i_j$. In the simplest case, where the coefficients $\kappa^i_j$ are constant, they can be always satisfied and the functions $f_i$ must be linear:
\begin{equation}
f_1=\beta_{11} u_1+\beta_{12}u_2+\beta_{13}u_3,\qquad f_2=\beta_{21} u_1+\beta_{22}u_2+\beta_{23}u_3,
\end{equation}
$\beta_{ij}$ being arbitrary constants.

The symmetries (\ref{sym32}) generate a 3--dimensional solvable Lie algebra,
\begin{equation}
\left[\Xi_i,\Xi_j\right]=0,\qquad \left[\Xi_i,\Xi_3\right]=\Xi_i,
\qquad (i,j=1,2),
\end{equation}
whereupon we may introduce the new variables
\begin{equation}
\label{newvars32}
\begin{array}{lll}
z_1=x_1-f_{1},\quad &z_2=x_2-f_{2},\\ 
w_1=u_1,\quad  &w_2=u_2,\quad  &w_3=u_3,
\end{array}
\end{equation}
and the generators of the point symmetries write as
\begin{equation}
\Xi_1=\frac{\partial}{\partial z_1}, \qquad \Xi_2=\frac{\partial}{\partial z_2}, \qquad
\Xi_4=z_1\frac{\partial}{\partial z_1}+z_2\frac{\partial}{\partial z_2}. 
\end{equation}
In terms of the new variables (\ref{newvars32}), the nonlinear system (\ref{monge32}) reduces to
\begin{equation}
\kappa^i_4w_{1,1}+\kappa^i_5w_{1,2}+\kappa^i_6w_{2,1}+\kappa^i_7w_{2,2}+\kappa^i_8w_{3,1}+\kappa^i_9w_{3,2}=0,
\end{equation}
\emph{i.e.}, reads as an autonomous and homogeneous quasilinear system.

\subsection{Case $m=n=3$}
By considering the gradient matrix of $u_\alpha(x_i)$ $(i=1,\ldots,3,\, \alpha=1,\ldots,3)$
\begin{equation}
H=\left(\begin{array}{ccc}
u_{1,1} & u_{1,2} & u_{1,3} \\ 
u_{2,1} & u_{2,2} & u_{2,3} \\
u_{3,1} & u_{3,2} & u_{3,3} \\
\end{array} \right)
\end{equation}
and its extracted minors of order 2
\begin{equation*}
\begin{aligned}
&H^1=\left|\begin{array}{cc}
u_{2,2} & u_{2,3}  \\ 
u_{3,2} & u_{3,3}
\end{array} \right|,\qquad
H^2=\left|\begin{array}{cc}
u_{2,1} & u_{2,3}  \\ 
u_{3,1} & u_{3,3}
\end{array} \right|,\qquad
H^3=\left|\begin{array}{cc}
u_{2,1} & u_{2,2}  \\ 
u_{3,1} & u_{3,2}
\end{array} \right|,\\
&H^4=\left|\begin{array}{cc}
u_{1,2} & u_{1,3}  \\ 
u_{3,2} & u_{3,3}
\end{array} \right|,\qquad
H^5=\left|\begin{array}{cc}
u_{1,1} & u_{1,3}  \\ 
u_{3,1} & u_{3,3}
\end{array} \right|,\qquad
H^6=\left|\begin{array}{cc}
u_{1,1} & u_{1,2}  \\ 
u_{3,1} & u_{3,2}
\end{array} \right|,\\
&H^7=\left|\begin{array}{cc}
u_{1,2} & u_{1,3}  \\ 
u_{2,2} & u_{2,3}
\end{array} \right|,\qquad
H^8=\left|\begin{array}{cc}
u_{1,1} & u_{1,3}  \\ 
u_{2,1} & u_{2,3}
\end{array} \right|,\qquad
H^9=\left|\begin{array}{cc}
u_{1,1} & u_{1,2}  \\ 
u_{2,1} & u_{2,2}
\end{array} \right|,\\
\end{aligned}
\end{equation*}
the nonlinear first order system of  Monge--Amp\`ere results composed by equations like
\begin{equation}
\begin{aligned}\label{monge33}
&\kappa^i_0 \det(H)+\kappa^i_1H^1+\kappa^i_2 H^2+\kappa^i_3 H^3+\kappa^i_4H^4+\kappa^i_5 H^5+\kappa^i_6 H^6\\
&\quad+\kappa^i_7H^7+\kappa^i_8 H^8+\kappa^i_9 H^9+\kappa^i_{10}u_{1,1}+\kappa^i_{11}u_{1,2}+\kappa^i_{12}u_{1,3}
+\kappa^i_{13}u_{2,1}\\
&\quad+\kappa^i_{14}u_{2,2}+\kappa^i_{15}u_{2,3}
+\kappa^i_{16}u_{3,1}+\kappa^i_{17}u_{3,2}+\kappa^i_{18}u_{3,3}+\kappa^i_{19}=0,
\end{aligned}
\end{equation}
$\kappa^i_j\left(u_{\alpha}\right)$ $(i=1,\ldots,3,\; j=0,\ldots ,19,\;\alpha=1,\ldots,3)$ arbitrary smooth functions of the indicated arguments.

Also in this case, the substitutions
\begin{equation}
\begin{aligned}
&u_1\rightarrow u_1+\alpha_{11}x_1+\alpha_{12}x_2+\alpha_{13}x_3,\\
&u_2\rightarrow u_2+\alpha_{21}x_1+\alpha_{22}x_2+\alpha_{23}x_3,\\  
&u_3\rightarrow u_3+\alpha_{31}x_1+\alpha_{32}x_2+\alpha_{33}x_3,
\end{aligned}
\end{equation}
where $\alpha_{ij}$ are arbitrary constants, allow us to obtain a system with $\kappa^i_{19}=0$ 
provided that $u_{i,j}=\alpha_{ij}$ is a solution of equations (\ref{monge33}). This requirement  implies some constraints on the coefficients $\kappa^i_j$ in the general case; on the contrary, no limitation to the values of the coefficients exists if they are assumed to be constant.

The system (\ref{monge33}), with $\kappa^i_{19}=0$, admits the Lie point symmetries spanned by the operators
\begin{equation}
\label{sym33}
\begin{aligned}
&\Xi_1=\frac{\partial}{\partial x_1},\qquad \Xi_2=\frac{\partial}{\partial x_2},\qquad \Xi_3=\frac{\partial}{\partial x_3},\\
&\Xi_4=\left(x_1-f_{1}\right)\frac{\partial}{\partial x_1}+\left(x_2-f_{2}\right)\frac{\partial}{\partial x_2}+\left(x_3-f_{3}\right)\frac{\partial}{\partial x_3},
\end{aligned}
\end{equation}
where $f_i(u_1,u_2,u_3)$ $(i=1,\ldots,3)$ are arbitrary smooth functions of their arguments, provided that 
\begin{equation}
\label{cond33}
\begin{aligned}
&\kappa^i_{0}-(f_{2;2}f_{3;3}-f_{2;3}f_{3;2})\kappa^i_{10}-(f_{1;3}f_{3;2}-f_{1;2}f_{3;3})\kappa^i_{11}\\
&\quad-(f_{1;2}f_{2;3}-f_{1;3}f_{2;2})\kappa^i_{12}-(f_{2;3}f_{3;1}-f_{2;1}f_{3;3})\kappa^i_{13}\\
&\quad-(f_{1;1}f_{3;3}-f_{1;3}f_{3;1})\kappa^i_{14}-(f_{1;3}f_{2;1}-f_{1;1}f_{2;3})\kappa^i_{15}\\
&\quad-(f_{2;1}f_{3;2}-f_{2;2}f_{3;1})\kappa^i_{16}-(f_{1;2}f_{3;1}-f_{1;1}f_{3;2})\kappa^i_{17}\\
&\quad-(f_{1;1}f_{2;2}-f_{1;2}f_{2;1})\kappa^i_{18}=0,\\
&\kappa^i_{1}+\kappa^i_{14}f_{3;3}-\kappa^i_{15}f_{2;3}-\kappa^i_{17}f_{3;2}+\kappa^i_{18}f_{2;2}=0,\\
&\kappa^i_{2}+\kappa^i_{13}f_{3;3}-\kappa^i_{15}f_{1;3}-\kappa^i_{16}f_{3;2}+\kappa^i_{18}f_{1;2}=0,\\
&\kappa^i_{3}+\kappa^i_{13}f_{2;3}-\kappa^i_{14}f_{1;3}-\kappa^i_{16}f_{2;2}+\kappa^i_{17}f_{1;2}=0,\\
&\kappa^i_{4}+\kappa^i_{11}f_{3;3}-\kappa^i_{12}f_{2;3}-\kappa^i_{17}f_{3;1}+\kappa^i_{18}f_{2;1}=0,\\
&\kappa^i_{5}+\kappa^i_{10}f_{3;3}-\kappa^i_{12}f_{1;3}-\kappa^i_{16}f_{3;1}+\kappa^i_{18}f_{1;1}=0,\\
&\kappa^i_{6}+\kappa^i_{10}f_{2;3}-\kappa^i_{11}f_{1;3}-\kappa^i_{16}f_{2;1}+\kappa^i_{17}f_{1;1}=0,\\
&\kappa^i_{7}+\kappa^i_{11}f_{3;2}-\kappa^i_{12}f_{2;2}-\kappa^i_{14}f_{3;1}+\kappa^i_{15}f_{2;1}=0,\\
&\kappa^i_{8}+\kappa^i_{10}f_{3;2}-\kappa^i_{12}f_{1;2}-\kappa^i_{13}f_{3;1}+\kappa^i_{15}f_{1;1}=0,\\
&\kappa^i_{9}+\kappa^i_{10}f_{2;2}-\kappa^i_{11}f_{1;2}-\kappa^i_{13}f_{2;1}+\kappa^i_{14}f_{1;1}=0.\\
\end{aligned}
\end{equation}
The vector fields (\ref{sym33}) span a 4--dimensional solvable Lie algebra,
\begin{equation}
\left[\Xi_i,\Xi_j\right]=0,\qquad \left[\Xi_i,\Xi_4\right]=\Xi_i,
\qquad (i,j=1,\ldots,3),
\end{equation}
whereupon we may introduce the new variables
\begin{equation}
\label{newvars33}
\begin{array}{lll}
z_1=x_1-f_{1},\quad &z_2=x_2-f_{2},\quad &z_3=x_3-f_{3},\\ 
w_1=u_1,\quad  &w_2=u_2,\quad  &w_3=u_3,
\end{array}
\end{equation}
and the generators of the point symmetries write as
\begin{equation}
\begin{aligned}
&\Xi_1=\frac{\partial}{\partial z_1}, \qquad \Xi_2=\frac{\partial}{\partial z_2},\qquad \Xi_3=\frac{\partial}{\partial z_3},\\
&\Xi_4=z_1\frac{\partial}{\partial z_1}+z_2\frac{\partial}{\partial z_2}+z_3\frac{\partial}{\partial z_3}.
\end{aligned} 
\end{equation}

In terms of the new variables (\ref{newvars33}), Eqs.~(\ref{monge33}) write as
\begin{equation}
\begin{aligned}
&\kappa^i_{10}w_{1,1}+\kappa^i_{11}w_{1,2}+\kappa^i_{12}w_{1,3}+\kappa^i_{13}w_{2,1}+\kappa^i_{14}w_{2,2}+\kappa^i_{15}w_{2,3}\\
&+\kappa^i_{16}w_{3,1}+\kappa^i_{17}w_{3,2}+\kappa^i_{18}w_{3,3}=0,
\end{aligned}
\end{equation}
\emph{i.e.}, they are in autonomous and homogeneous quasilinear (linear, if the coefficients are constant) form.

Conditions (\ref{cond33}) play severe restrictions to the expression of the coefficients $\kappa^i_j$. 
When these coefficients are assumed to be constant, we are forced to take
\begin{equation}
\begin{aligned}
&f_1=\beta_{11} u_1+\beta_{12}u_2+\beta_{13}u_3,\\
&f_2=\beta_{21} u_1+\beta_{22}u_2+\beta_{23}u_3,\\
&f_3=\beta_{31} u_1+\beta_{32}u_2+\beta_{33}u_3,
\end{aligned}
\end{equation}
where $\beta_{ij}$ are arbitrary constants; also in such simple case, the reduction to linear form is not always possible due to (\ref{cond33}).

\subsection{Case $m$ and $n$ arbitrary}
It is easily recognized that,  a general Monge--Amp\`ere system with $m$ dependent variables and $n$ independent variables, provided that some suitable conditions on the coefficients  (at most depending on the
field variables) are satisfied, is invariant with respect to the Lie groups generated by the vector fields
\begin{equation}
\Xi_i=\frac{\partial }{\partial x_i}\; (i=1,\ldots,n), \qquad \Xi_{n+1}=\sum_{i=1}^n \left(x_i-f_i(u_\alpha)\right)\frac{\partial }{\partial x_i},
\end{equation}
where $f_i(u_\alpha)$ are smooth functions of ($u_1,\ldots,u_m)$ which have to be linear in their arguments when the coefficients of the Monge--Amp\`ere system are constant.

As one expects, for $m>3$ or $n>3$, we have a situation similar to the case $m=n=3$, \emph{i.e.}, even in the case of constant coefficients, not all Monge--Amp\`ere systems can be reduced to (quasi)linear form. 

\section{Conclusions}
\label{conclusions}
In this paper, we proved a theorem giving necessary and sufficient conditions for transforming a nonlinear first 
order system of partial differential equations involving the derivatives in polynomial form to an equivalent
autonomous system polynomially homogeneous in the derivatives. The theorem is based on the Lie point symmetries admitted by the nonlinear system, and the proof is constructive, in the sense that it leads to the algorithmic
construction of the invertible mapping performing the task.

The theorem is applied to a class of first order nonlinear systems belonging to the family of Monge--Amp\`ere systems that have been characterized by Boillat in 1997.  These systems share with the quasilinear systems the property of the linearity of the Cauchy problem. 
These systems are also completely exceptional \cite{Lax,Boillat}, and are made by equations which are 
expressed as linear combinations (with coefficients depending at most on the 
independent and the dependent variables) of all minors extracted from the gradient 
matrix of $u_\alpha=u_\alpha(x_i)$ $(\alpha=1,\ldots,m,\;i=1,\ldots,n)$. We considered 
explicitly either the case of constant coefficients or the case of coefficients depending 
on the field variables, for $m=2,3$ and $n=2,3$. If $m=2$ and $n=2,3$, or $n=2$ and 
$m=2,3$, and the coefficients  are assumed to be constant, we proved that the 
Monge--Amp\`ere systems can always be transformed to linear form.

Nevertheless, for arbitrary $m$ and $n$, Monge--Amp\`ere systems, provided that the 
coefficients entering their
equations satisfy some constraints, can be mapped to  first order quasilinear 
autonomous and homogeneous systems. This, in some sense, casts new light on the 
fact, underlined by Boillat \cite{Boillat.2}, that 
Monge--Amp\`ere systems, because of the linearity of the Cauchy problem, are the 
closest to quasilinear systems, which are Monge systems. 

Moreover, an example of a first order system polynomial in the derivatives that can be 
reduced to a system polynomially homogeneous in the derivatives (equivalent to a 
second order partial differential equation for a surface in $\mathbb{R}^3$ such that its 
Gaussian curvature is proportional to the square of its mean curvature), is provided.

\section*{Acknowledgments}
The authors thank the Referee for the helpful comments leading to improve the quality of the paper. Work supported by G.N.F.M. of the Istituto Nazionale di Alta Matematica.

\medskip


\begin{thebibliography}{99}

\bibitem{Ovsiannikov}  L.~V.~Ovsiannikov. Group analysis of differential equations. 
Academic Press, New York, 1982.

\bibitem{Ibragimov}  N.~H.~Ibragimov. {Transformation groups applied
to mathematical physics}. D. Reidel Publishing Company, Dordrecht, 1985.

\bibitem{Olver}  P.~J.~Olver. {Applications of Lie groups to differential equations}. 
Springer, New York, 1986.

\bibitem{Olver1995} P.~J.~Olver. {Equivalence, Invariants, and
Symmetry}. Cambridge University Press, Cambridge, 1995.

\bibitem{Baumann} G.~Baumann. Symmetry Analysis of Differential Equations
with Mathematica. Springer, New York, 2000.

\bibitem{BlumanAnco} G.~W.~Bluman, S.~C.~Anco.  Symmetry and Integration
Methods for Differential Equations. Springer, New York, 2002.

\bibitem{Meleshko2005} S.~V.~Meleshko. {Methods for Constructing Exact Solutions
of Partial Differential Equations}. Springer, New York, 2005.

\bibitem{BlumanCheviakovAnco2009} G.~W.~Bluman, A.~F.~Cheviakov, S.~C.~Anco.
{Applications of Symmetry Methods to Partial Differential Equations}. 
Springer, New York, 2009.

\bibitem{Bluman:article} S.~Kumei, G.~W.~Bluman. When nonlinear differential 
equations  are equivalent  to linear differential equations.
SIAM J. Appl. Math., \textbf{42}, 1157--1173, 1982.

\bibitem{BlumanKumei1989}  G.~W.~Bluman, S.~Kumei. {Symmetries and
differential equations}. Springer, New York, 1989.

\bibitem{DonatoOliveri1994}  A.~Donato, F.~Oliveri. Linearization procedure of 
nonlinear first order systems of PDE's by means of canonical variables related to
Lie groups of point transformations. J. Math. Anal. Appl., \textbf{188}, 552--568, 1994.

\bibitem{DonatoOliveri1995} A.~Donato, F.~Oliveri. When nonautonomous equations
are equivalent to autonomous ones. Appl. Anal., \textbf{58}, 313--323, 1995.

\bibitem{DonatoOliveri1996} A.~Donato, F.~Oliveri. How to build up variable
transformations allowing one to map nonlinear hyperbolic equations
into autonomous or linear ones. Transp. Th. Stat. Phys.,
\textbf{25}, 303--322, 1996.

\bibitem{CurroOliveri2008} C.~Curr\`o, F.~Oliveri. Reduction of nonhomogeneous
quasilinear $2\times  2$ systems to homogeneous and autonomous form.
{J. Math. Phys.}, \textbf{49}, 103504-1--103504-11, 2008.

\bibitem{Oliveri2010} F.~Oliveri. Lie symmetries of differential equations: classical results
and recent contributions. Symmetry, \textbf{2}, 658--706, 2010.

\bibitem{Oliveri2012} F.~Oliveri. General dynamical systems described by first order 
quasilinear PDEs reducible 
to homogeneous and autonomous form. Int. J. Non-linear Mech., \textbf{47}, 53--60, 
2012.

\bibitem{GorgoneOliveri2016} M.~Gorgone, F.~Oliveri. Nonlinear first order partial 
differential equations reducible to first order homogeneous and autonomous 
quasilinear ones. Ricerche di Matematica, 1--13,  doi: 10.1007/s11587-016-0286-8, 
2016.

\bibitem{Pressley} A.~Pressley. Elementary differential geometry. Springer--Verlag, 
London, 2010.

\bibitem{MOVJMAA2007} G.~Manno, F.~Oliveri, R.~Vitolo. On differential equations 
characterized by their Lie point symmetries. J. Math. Anal. Appl., \textbf{332}, 
767--786, 2007.

\bibitem{MOVTMP2007} G.~Manno, F.~Oliveri, R.~Vitolo. Differential equations 
uniquely determined by algebras of point symmetries. Theor. Math. Phys., 
\textbf{151}, 843--850, 2007.

\bibitem{Boillat.2} G.~Boillat. Sur la forme g\'en\'erale du syst\`eme de 
Monge--Amp\`ere. C. R. Acad. Sci. Paris, \textbf{325}, S\'erie I, 339--342, 1997.

\bibitem{Lax}  P.~D.~Lax. Hyperbolic systems of conservation laws, II. Comm. Pure 
Appl. Math., \textbf{10}, 537--566,  1957.

\bibitem{Boillat} G.~Boillat. La propagation des ondes. Gauthier--Villars, Paris, 1965.



\end{thebibliography}
\end{document}